\documentclass[journal,twoside,web]{ieeecolor}
\usepackage{lcsys}
\usepackage{cite}
\usepackage{amsmath,amssymb,amsfonts}
\usepackage{hyperref}
\usepackage{graphicx}
\usepackage{epstopdf}
\usepackage{stmaryrd}
\usepackage{subcaption}

\title{Time-Invariant Control Barrier Functions for Bounded STL Safety}

\author{Avinash Malik \thanks{Department of Electrical, Computer, and
    Software Engineering, University of Auckland, NZ, email: avinash.malik@auckland.ac.nz}}
\newtheorem{theorem}{Theorem}[section]

\newtheorem{remark}[theorem]{Remark}
\newtheorem{proposition}[theorem]{Proposition}
\newtheorem{corollary}[theorem]{Corollary}

\begin{document}

\maketitle

\begin{abstract}
  Enforcing finite-time persistence specifications in Signal Temporal
  Logic typically requires formulating Control Barrier Functions within
  an extended state-time space. This introduces strict mathematical
  assumptions on global clock synchronization, rendering multi-agent
  execution brittle under asynchronous network delays, clock skew, and
  non-monotonic jitter. In this paper, we prove that for bounded-time
  safety tasks, explicit time dependency can be mathematically
  eliminated for general continuous-time nonlinear systems. By
  leveraging the reciprocal integral of worst-case system dissipation,
  we systematically compile temporal specifications into static,
  lower-dimensional spatial invariants. We formally prove that this
  time-invariant realization is strictly sound, removes the need for
  heuristic time-varying envelope tuning, and guarantees robust forward
  invariance even under severe network desynchronization.
\end{abstract}

\section{Introduction}
\label{sec:introduction}

Networked Control Systems (NCS) operating in multi-agent autonomous
environments rely heavily on distributed communication to coordinate
complex, safety-critical tasks. However, maintaining
microsecond-accurate global clock synchronization across wireless links
is practically unachievable~\cite{gupta2010networked}. Physical clock
drift due to thermal fluctuations, asymmetrical transmission delays
($\Delta t$), and non-monotonic discrete clock adjustments (e.g., Network
Time Protocol (NTP) backward or forward snaps) introduce persistent
timing corruptions into local state estimates. While conventional
control techniques assume ideal temporal alignment, real-world network
anisotropy destabilizes time-dependent safety mechanisms, posing a
fundamental threat to physical autonomy. To guarantee safety under
temporal specifications, prior work has relied on Control Barrier
Functions (CBFs)~\cite{ames2019control}. In particular, finite-time
persistence requirements, such as the $\mathbf{G}_{[0, \tau]}$ operator in
Signal Temporal Logic (STL)~\cite{maler2004monitoring}, are
traditionally encoded by mapping system dynamics into an extended
state-time space
$\mathbb{R}^n \times \mathbb{R}_+$~\cite{lindemann2018control,lindemann2020barrier},
(Time-Varying CBFs, TV-CBFs) their robust
extensions~\cite{biertumpfel2025robust,jankovic2018robust} and their
soft-maximum variants~\cite{safari2024time} evaluate partial temporal
derivatives ($\frac{\partial b}{\partial t}$) alongside spatial gradients. Under
non-ideal network execution, however, clock desynchronization corrupts
the local temporal coordinate $t$. As a consequence, evaluating
$\frac{\partial b}{\partial t}$ under clock jitter or transmission lag introduces
mathematical singularities, forcing instantaneous envelope collapse
($b(x, t) < 0$) or inducing Quadratic Program (QP) solver infeasibility.
Standard robust formulations~\cite{jankovic2018robust,
  kolathaya2018input} attempt to mitigate these vulnerabilities by
padding spatial buffers with static bounds, yet they fail
catastrophically whenever real-world network noise exceeds
pre-calculated assumptions.

In this work, we introduce a structural paradigm shift that completely
bypasses the synchronization fallacy through autonomous compilation. We
prove that for any bounded temporal safety persistence in the fragment
$\mathbf{G}_{[0, \tau]}\phi$, of STL, explicit time dependency can be
mathematically eliminated offline. By integrating the worst-case system
dissipation, our proposed \textit{Autonomous Persistence Barrier} (APB)
framework maps temporal horizons directly into a static spatial margin.
This enforces temporal specifications purely within spatial geometry,
rendering safety evaluation completely invariant to local clock drift,
network latency, and discrete time jumps. Our major contributions are as
follows:

\begin{enumerate}
\item \textbf{Temporal-to-Spatial compilation theorem:} We establish the
  exact closed-form reciprocal level-set mapping that compiles temporal
  persistence horizons $\tau$ into static, spatial invariant bounds.
\item \textbf{Structural failure proof of TV-CBFs:} We formally
  demonstrate that extended-state time-varying barrier formulations
  under non-monotonic clock jitter and communication delay incur
  gradient singularities that guarantee safety violations.
\item \textbf{Experimental and statistical validation:} We provide
  quantitative validation demonstrating 0\% safety violations across
  degraded network conditions compared to high failure rates in standard
  time-varying baselines.
\end{enumerate}

The rest of the paper is arranged as follows:
Section~\ref{sec:motiv-exampl-vuln} motivates the problem statement and
provides an informal solution. Followed by the preliminaries and the
problem formulation in Section~\ref{sec:prel--probl}. The core
theoretical contribution in provided in
Section~\ref{sec:core-theory:-auton}. Followed by a compilation
technique in Section~\ref{sec:recursive_compilation}. Rigorous
experimental evaluation is performed in
Section~\ref{sec:experimental_results}. Comparison with the current
state-of-the-art is described in Section~\ref{sec:related_work},
following by the conclusions and future work in
Section~\ref{sec:conclusion}.

\section{Motivating Example: Vulnerabilities of Time-Varying Baseline}
\label{sec:motiv-exampl-vuln}

To illustrate the critical failure points of time-dependent safety
filters under realistic networked conditions, we evaluate a simulated
deterministic intersection scenario.

\textbf{Simulation Setup:} The scenario
(Figure~\ref{fig:motivating_example}) consists of two autonomous
omnidirectional agents executing an intersecting maneuver. Agent 1
originates at (0.5, -4.5) and navigates toward (0.5, 4.5), while Agent 2
crosses perpendicularly from (-4.5, 0.0) to (4.5, 0.0). Both agents are
modeled with single-integrator kinematics in a 2D plane, governed by
$\dot{p}_i = u_i$, where $p_i \in \mathbb{R}^2$ is the position and
$u_i \in \mathbb{R}^2$ is the velocity control input. The control inputs are
constrained by a maximum velocity parameter of 1.0 m/s.

\textbf{STL Safety Property:} The fundamental safety requirement is to
maintain an absolute minimum safe distance of 1.0 m ($d_{\text{min}}$).
This separation must be maintained over a temporal persistence horizon
of 0.35 s ($\tau$). Formally, this enforces the STL specification
$\mathbf{G}_{[0, 0.35]} (||p_1 - p_2|| \ge d_{\text{min}})$.

\textbf{Standard TV-CBF Translation:} Standard
approaches~\cite{lindemann2018control,lindemann2020barrier} enforce this
STL property by translating it into a TV-CBF. The system constructs a
time-decaying spatial envelope
$\gamma(t) = \max(d_{\text{min}}, h_{\text{req}} - (h_{\text{req}} -
d_{\text{min}}) \frac{t_{\text{local}}}{\tau})$. The barrier is then
evaluated as $b(x, t) = \text{dist} - \gamma(t)$, which relies heavily on
evaluating the partial temporal derivative $\frac{\partial b}{\partial t}$.

\textbf{Network Anisotropy \& Baseline Collapse:} Real-world multi-agent
systems inevitably suffer from network degradation. To simulate this, we
inject \textit{just} a single timing fault; a non-monotonic clock jitter
of -0.9 s backward NTP correction triggered mid-maneuver at exactly 2.0
s.

As demonstrated in Figure~\ref{fig:motivating_example}, the Standard
TV-CBF completely fails under these conditions, resulting in a physical
crash. The root cause of this failure lies directly in the explicit time
variable $t_{\text{local}}$. When the backward clock jitter occurs at
2.0 s, the local clock estimate abruptly drops. This discrete reduction
in $t$ forces the temporal envelope $\gamma(t)$ to instantaneously expand
outward. Consequently, the barrier value $b(x, t)$ instantly drops below
zero, rendering the mathematical constraints unresolvable and forcing
the robot into an unrecoverable collision state.

\textbf{The Autonomous Alternative:} In contrast, the proposed APB
framework evaluates safety purely through spatial geometry. By utilizing
a statically compiled spatial projection
$h_{\text{req}} = d_{\text{min}} + 2 v_{\max} \tau$, the barrier is defined
as $b(x) = \text{dist} - h_{\text{req}}$. This completely eliminates the
explicit time dependency ($\frac{\partial b}{\partial t} = 0.0$). As a result, Agent 1
smoothly glides through the intersection, successfully maintaining
absolute safety clearance and completing its trajectory despite the
severe temporal network chaos.

\begin{figure}[htbp]
  \centering
  \includegraphics[width=\linewidth]{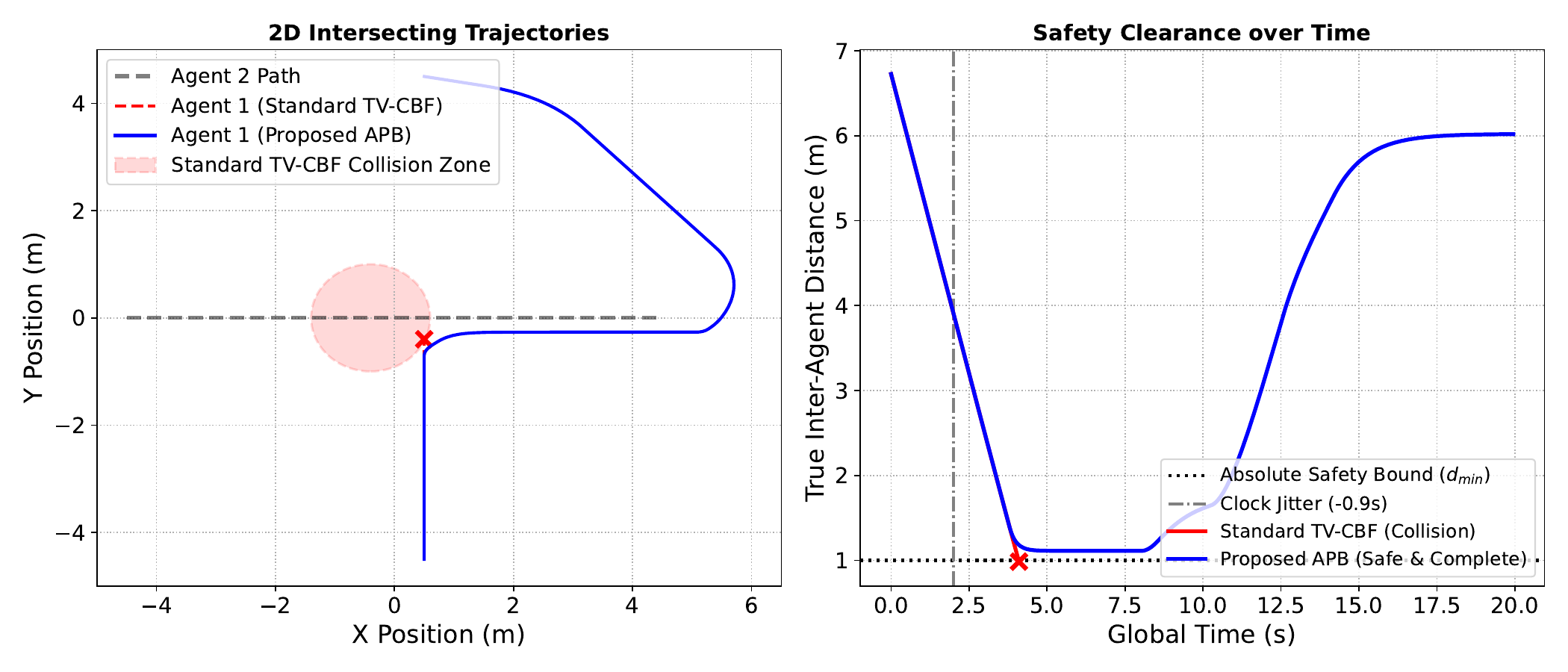}
  \caption{Comparison of safety controllers during a deterministic
    cross-trajectory scenario with injected network anisotropy. }
  \label{fig:motivating_example}
\end{figure}

In the upcoming sections we derive the temporal to spatial compilation
theorem that enables overcoming network degradation.

\section{Preliminaries \& Problem Formulation}
\label{sec:prel--probl}

\subsection{System Dynamics}
Consider a general continuous-time nonlinear dynamic system governed by
the differential equation: $ \dot{x}(t) = f(x(t), u(t)) $, where
$x(t) \in \mathcal{X} \subset \mathbb{R}^n$ represents the continuous state vector at time
$t \in \mathbb{R}_{\geq 0}$, and
$u(t) \in \mathcal{U} \subset \mathbb{R}^m$ is the control input. We assume the control constraint
set $\mathcal{U}$ is compact. The vector field
$f: \mathbb{R}^n \times \mathbb{R}^m \to \mathbb{R}^n$ is assumed to be locally Lipschitz continuous with
respect to the state $x$ and continuous with respect to the control
input $u$. Under any piecewise continuous control signal $u(\cdot)$, these
conditions guarantee the existence and uniqueness of a state trajectory
$x(t)$ originating from an initial condition $x(0) = x_0$.

\subsection{Signal Temporal Logic (STL) Fragment}
\label{sec:sign-temp-logic}

We restrict our focus to the bounded persistence fragment of STL. The
foundational element of this logic is the atomic proposition $p$, which
is mapped to a continuously differentiable scalar spatial predicate
$h: \mathbb{R}^n \to \mathbb{R}$. The Boolean truth value of $p$ evaluated at state
$x(t)$ is defined as: $ p \Leftrightarrow (h(x(t)) \ge 0) $. The syntax of the
restricted STL fragment used in this work is generated by the following
grammar:
\begin{equation}
  \phi ::= p \mid \neg \phi \mid \phi_1 \land \phi_2 \mid \phi_1 \lor \phi_2, \quad \psi ::= \mathbf{G}_{[0, \tau]} \phi
  \label{eq:1}
\end{equation}
where $\phi$ represents a spatial Boolean formula composed of atomic propositions, and $\mathbf{G}_{[0, \tau]}$ denotes the temporal "Globally" (or "Always") operator applied over a bounded, strictly positive time interval $[0, \tau]$ for $\tau > 0$. 

The continuous-time semantics for the satisfaction of the persistence specification $\psi$ by a trajectory $x(\cdot)$ originating at $t=0$ are formally defined as:
\begin{equation}
  x(\cdot) \models \mathbf{G}_{[0, \tau]} \phi \iff \forall t \in [0, \tau], \; x(t) \models \phi
  \label{eq:4}
\end{equation}

We restrict the temporal interval to $[0, \tau]$ as the APB framework
operates as a receding-horizon safety filter, enforcing forward
invariance for a look-ahead horizon $\tau$ from the current local
evaluation time.

\subsection{Standard Time-Varying CBFs (TV-CBFs)}
\label{sec:stand-time-vary}

To enforce time-dependent specifications using barrier functions,
standard approaches map the system into an extended state-time space
$\mathbb{R}^n \times \mathbb{R}_+$~\cite{lindemann2018control,lindemann2020barrier}. A
Time-Varying CBF (TV-CBF) is typically formulated by subtracting a
continuously differentiable time-varying boundary function
$\gamma: \mathbb{R}_+ \to \mathbb{R}$ from the spatial predicate:
\begin{equation}
  b_{tv}(x, t) = h(x) - \gamma(t)
  \label{eq:3}
\end{equation}
where $\gamma(t)$ dictates the required spatial margin at any given local
time $t$. To guarantee safety, a controller must render the time-varying
set $\mathcal{C}(t) = \{x \in \mathbb{R}^n \mid b_{tv}(x, t) \ge 0\}$ forward invariant. This
requires satisfying the derivative condition:
$ \dot{b}_{tv}(x, t) = \nabla_x h(x)^T \dot{x} - \frac{\partial \gamma}{\partial t} \dot{t} \ge
-\alpha_{cbf}(b_{tv}(x, t)) $. % Since this condition relies on the partial
% temporal derivative $\frac{\partial \gamma}{\partial t}$ and the progression of the local
% clock $\dot{t}$, its mathematical continuity is strictly dependent on
% ideal time synchronization.

\subsection{Problem Statement}
Standard Time-Varying Control Barrier Functions
(TV-CBFs~\cite{lindemann2018control,lindemann2020barrier,jankovic2018robust,biertumpfel2025robust})
attempt to enforce $\psi = \mathbf{G}_{[0, \tau]} \phi$ by appending an explicit
local clock state $t$ to the spatial state $x(t)$, evaluating safety
over the extended domain $\mathbb{R}^n \times \mathbb{R}_+$. However, under networked
conditions with transmission latency ($\Delta t$) or clock jitter
($\dot{t} \neq 1$), this extended-state approach incurs temporal
singularities and solver failures.

\textbf{Problem:} Given a persistence specification
$\psi = \mathbf{G}_{[0, \tau]} \phi$ defined over spatial predicates
$h(x)$, and a certified dissipation bound $\alpha \in \mathcal{K}$, synthesize a static,
continuously differentiable, autonomous barrier function
$b: \mathbb{R}^n \to \mathbb{R}$ such that satisfying the standard initial condition
$b(x(0)) \ge 0$ strictly guarantees $x(\cdot) \models \psi$. The synthesized function
$b(x)$ must depend solely on the geometric state $x(t)$ and completely
eliminate any explicit dependence on the temporal state $t$.

\section{Core Theory: Autonomous Persistence Barriers (APB)}
\label{sec:core-theory:-auton}

To eliminate the strict temporal dependencies of standard TV CBFs (c.f.
Section~\ref{sec:stand-time-vary}), we introduce a mathematical
coordinate transformation to compile bounded temporal persistence into a
spatial invariant constraint. We achieve this by leveraging the
\textit{reciprocal integral}, traditionally used in finite-time
stability analysis~\cite{bhat2000finite}, alongside the comparison
lemma~\cite{khalil2002nonlinear}. By evaluating the worst-case unforced
spatial decay of the system, we transform the temporal specification
$\tau$ into an equivalent geometric margin.

\subsection{Certified Dissipation Bound Assumption}
\label{sec:cert-diss-bound}

To map a temporal specification into a purely spatial domain, we require
a worst-case bound on how quickly the system can be forced toward an
unsafe subset of the state space.

\textbf{Assumption 1 (Certified Dissipation):} For a given spatial
predicate $h(x)$, there exists a known, certified locally Lipschitz
extended class-$\mathcal{K}$ function $\alpha$ that lower-bounds the system's spatial
dissipation, representing a certified worst-case dissipation bound toward the boundary:
\begin{equation}
  \inf_{u \in \mathcal{U}} \nabla h(x)^T f(x, u) \ge -\alpha(h(x))
  \label{eq:2}
\end{equation}
This bounds the maximum rate at which the predicate value can degrade
under the worst possible allowable control input.

\subsection{Reciprocal Level-Set Inversion}
\label{sec:inversion}

Under Assumption 1, the system's spatial dissipation toward the boundary
$h(x)=0$ is lower-bounded by $\dot{h}(x) \ge -\alpha(h(x))$. Using the
reciprocal integral we define a change of coordinates mapping a spatial
margin $h_{\text{req}}$ to its minimum guaranteed time-to-violation:
\begin{equation*}
    \mathcal{I}_\alpha(h_{\text{req}}) := \int_{0}^{h_{\text{req}}} \frac{ds}{\alpha(s)}
\end{equation*}
While traditional finite-time stability uses this integral in the
forward direction to compute a settling time from a given initial
condition~\cite{bhat2000finite}, our approach relies on a \textbf{novel}
reciprocal level-set inversion. We use the integral backward to compute
the required initial safety margin from a bounded temporal
specification.

Since $\alpha$ is continuous and strictly increasing with $\alpha(0)=0$, the
integral mapping $\mathcal{I}_\alpha$ is continuous and strictly increasing, and is
therefore strictly invertible on its image. To enforce a temporal
persistence horizon $\tau$, we compute the required spatial margin by
applying this inverse mapping:
\begin{equation}
    h_{\text{req}} = \mathcal{I}_\alpha^{-1}(\tau)
\end{equation}

\subsection{The Main Compilation Theorem}
\label{sec:compilation}

\begin{theorem}[\textbf{Temporal-to-Spatial Compilation}]
  \label{thm:1}
  Let $\psi = \mathbf{G}_{[0, \tau]} (h(x) \ge 0)$ where $h \in C^1$, and suppose
  Assumption 1 (Equation~\eqref{eq:2}) holds for some continuous
  class-$\mathcal{K}$ function $\alpha$. Define
\begin{equation*}
    h_{\text{req}} = \mathcal{I}_\alpha^{-1}(\tau), \quad b(x) = h(x) - h_{\text{req}}
\end{equation*}
If a controller renders the set $\mathcal{C} = \{x \in \mathbb{R}^n \mid b(x) \ge 0\}$ forward invariant, then every trajectory with $b(x(0)) \ge 0$ satisfies $x(t) \models \mathbf{G}_{[0, \tau]} (h(x) \ge 0)$.
\end{theorem}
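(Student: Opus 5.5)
The plan is a direct argument: forward invariance of $\mathcal{C}$ already keeps $h(x(t))$ above $h_{\text{req}}$, so the whole proof reduces to showing that the compiled margin $h_{\text{req}}$ is well defined and nonnegative. The Assumption 1 dissipation bound and the comparison lemma are not needed for the implication as stated. I will carry it out in three steps.

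\emph{Step 1 (well-definedness and sign of $h_{\text{req}}$).} Since $\alpha$ is class-$\mathcal{K}$, $\alpha(s)>0$ for $s>0$. Hence $\mathcal{I}_\alpha(r)=\int_0^r ds/\alpha(s)$ is strictly increasing and continuous on the set where it is finite, and $\mathcal{I}_\alpha(0)=0$. For $\tau>0$ lying in the image of $\mathcal{I}_\alpha$, the preimage $h_{\text{req}}=\mathcal{I}_\alpha^{-1}(\tau)$ is unique. It must be strictly positive, because $\mathcal{I}_\alpha(r)\le 0<\tau$ for every $r\le 0$.

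\emph{Step 2 (invariance).} Take any trajectory with $b(x(0))\ge 0$, i.e.\ $x(0)\in\mathcal{C}$. The trajectory exists and is unique by the Lipschitz assumptions of Section~\ref{sec:prel--probl}. By the hypothesis that the controller renders $\mathcal{C}$ forward invariant, $x(t)\in\mathcal{C}$ for all $t\ge 0$ in the domain of existence, and in particular for all $t\in[0,\tau]$. This gives $h(x(t))\ge h_{\text{req}}$.

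\emph{Step 3 (semantics).} Combining the two steps gives $h(x(t))\ge h_{\text{req}}>0$ for every $t\in[0,\tau]$. By the semantics \eqref{eq:4}, this is exactly $x(\cdot)\models\mathbf{G}_{[0,\tau]}(h(x)\ge 0)$. A short supplementary remark will note why the construction is not vacuous. Even if the controller fails to keep $\mathcal{C}$ invariant, Assumption 1 gives $\dot h\ge-\alpha(h)$. The comparison lemma, applied to $\dot y=-\alpha(y)$ with $y(0)=h_{\text{req}}$, then shows that $h$ cannot reach $0$ before time $\mathcal{I}_\alpha(h_{\text{req}})=\tau$. So $b(x(0))\ge 0$ alone already certifies the horizon.

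The main obstacle is Step 1, not the invariance argument. If $\alpha$ is locally Lipschitz, as Assumption 1 states, then $\alpha(s)\le Ls$ near $0$. In that case $1/\alpha$ is not integrable at $0$, so $\mathcal{I}_\alpha\equiv+\infty$ on $(0,\infty)$ and $\mathcal{I}_\alpha^{-1}(\tau)$ is undefined. I would handle this by stating explicitly that $\tau$ is assumed to lie in the image of $\mathcal{I}_\alpha$. This holds, for example, for non-Lipschitz-at-zero $\alpha$, or for a constant-rate bound, as in the motivating example where $h_{\text{req}}=d_{\min}+2v_{\max}\tau$ comes from a shifted predicate. Alternatively, I would replace the lower limit $0$ by the relevant level, so that the inversion is taken of a finite quantity. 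With that caveat made precise, the remaining steps are immediate.
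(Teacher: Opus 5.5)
Your proof is correct, but it takes a different route from the paper's. The paper never uses the forward-invariance hypothesis. It sets $h_0=h(x(0))$ and compares $h(x(t))$ with the solution of $\dot\omega=-\alpha(\omega)$, $\omega(0)=h_0$. Separating variables gives $t=\int_{\omega(t)}^{h_0}ds/\alpha(s)$, and $\mathcal{I}_\alpha(h_0)\ge\tau$ then forces $\omega(t)>0$, hence $h(x(t))>0$, for $t<\tau$. That is essentially the argument you put in your supplementary remark.

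Your main line instead reads $h(x(t))\ge h_{\text{req}}$ directly off the invariance of $\mathcal{C}$. As a result, Assumption~1 and the comparison lemma play no role, the endpoint $t=\tau$ is covered without a continuity step, and you get the bound for all $t\ge 0$. What your route buys is the observation that the theorem as literally stated is nearly tautological: $\tau$ and $\alpha$ enter only through $h_{\text{req}}\ge 0$, so any nonnegative margin would do. What the paper's route buys is the stronger single-shot claim of the subsequent Remark: $b(x(0))\ge 0$ alone certifies $[0,\tau]$ under every admissible input. That is the argument that actually justifies the specific value $\mathcal{I}_\alpha^{-1}(\tau)$.

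Your well-definedness caveat points to a genuine defect in the paper's setup, not in your proof. With the locally Lipschitz $\alpha$ required in Assumption~1, $\mathcal{I}_\alpha\equiv+\infty$ on $(0,\infty)$, so $h_{\text{req}}$ does not exist. Only the theorem's weaker hypothesis of a merely continuous class-$\mathcal{K}$ function (e.g.\ $\alpha(s)=c\sqrt{s}$) keeps it non-vacuous.

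One small correction to your Step~1 concerns the extended class-$\mathcal{K}$ $\alpha$ of Assumption~1. There $\int_0^r ds/\alpha(s)>0$ for $r<0$, because the integrand is negative and the integration runs backward. So $\mathcal{I}_\alpha(r)\le 0$ fails, and $\tau$ may also have a negative preimage. Positivity and uniqueness of $h_{\text{req}}$ come from taking the inverse on $[0,\infty)$, where $\mathcal{I}_\alpha$ is increasing and $\mathcal{I}_\alpha(0)=0\neq\tau$.
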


\begin{proof}
Let $h_0 = h(x(0))$. Consider the scalar comparison system
$\dot{\omega}(t) = -\alpha(\omega(t))$ with initial condition
$\omega(0) = h_0$. By the comparison lemma~\cite{khalil2002nonlinear},
$h(x(t)) \ge \omega(t)$ for all $t \ge 0$.

Separating variables and integrating the comparison system from $0$ to
$t$ yields:
\begin{equation*}
    t = \int_{\omega(t)}^{h_0} \frac{ds}{\alpha(s)}
\end{equation*}
Suppose the system starts with a valid initial condition
$b(x(0)) \ge 0$, which implies
$h_0 \ge h_{\text{req}} = \mathcal{I}_\alpha^{-1}(\tau)$, or equivalently,
$\mathcal{I}_\alpha(h_0) \ge \tau$. If we evaluate the system at any time
$t < \tau$, it follows that:
\begin{equation*}
    \int_{\omega(t)}^{h_0} \frac{ds}{\alpha(s)} < \int_{0}^{h_0} \frac{ds}{\alpha(s)}
\end{equation*}
Because the integrand $\frac{1}{\alpha(s)}$ is strictly positive for
$s > 0$, this strict inequality requires that $\omega(t) > 0$. Consequently,
$h(x(t)) \ge \omega(t) > 0$.

Hence, $h(x(t)) \ge 0$ for all $t \in [0, \tau]$, which perfectly satisfies the
continuous-time semantics of the specification (Equation~\eqref{eq:4})
$\mathbf{G}_{[0, \tau]} \phi$.
\end{proof}

\begin{remark}[\textbf{Single-Shot vs. Receding Horizon}]
  While $b(x(0)) \ge 0$ under Assumption~1 suffices for single-shot
  $[0, \tau]$ safety, deploying APB as a receding-horizon filter
  (Section~\ref{sec:control_synthesis}) renders
  $\mathcal{C} = \{x \mid b(x) \ge 0\}$ forward invariant, maintaining
  $h(x(t)) \ge h_{\text{req}}$ for all $t \ge 0$.
\end{remark}

\subsection{Invariance Violation and Robustness Theorems}
\label{sec:robustness}

\begin{proposition}[\textbf{Structural Vulnerability of TV-CBFs}]
  \label{prop:1}
  Let $b_{tv}(x, \tilde{t}) = h(x) - \gamma(\tilde{t})$ be an extended-state
  TV-CBF evaluated over a time-varying boundary $\gamma(\tilde{t})$ with
  $\frac{\partial \gamma}{\partial \tilde{t}} \neq 0$. Consider a discrete temporal reset
  event occurring at physical time $t$, where the local clock
  $\tilde{t}$ snaps by offset $\delta \neq 0$:
  $ \tilde{t}^+ = \tilde{t}^- + \delta $. While the physical system state
  remains continuous ($x(t^+) = x(t^-)$), the temporal coordinate shift
  induces a discrete barrier jump:
  \begin{equation*}
    b_{tv}(x(t^+), \tilde{t}^+) = h(x(t^-)) - \gamma(\tilde{t}^- + \delta) \neq b_{tv}(x(t^-), \tilde{t}^-)
  \end{equation*}
  While $b_{tv}(x,\tilde{t})$ is continuously differentiable over
  $\mathbb{R}^n \times \mathbb{R}_+$, the composed trajectory
  $t \mapsto b_{tv}(x(t),\tilde{t}(t))$ is generally discontinuous under the
  reset map. Consequently, the continuous-time CBF condition
  $\dot{b}_{tv} \ge -\alpha_{\text{cbf}}(b_{tv})$ is undefined at the reset
  instant, thereby invalidating the continuous-time forward-invariance
  guarantee and potentially inducing instantaneous constraint violations
  or QP solver infeasibility.
\end{proposition}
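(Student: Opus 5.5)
The plan is to split Proposition~\ref{prop:1} into two claims: an algebraic jump claim and an analytic claim that the continuous-time CBF condition fails. Both can be read directly off the barrier's definition in Equation~\eqref{eq:3}.

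\textbf{Step 1: the barrier jump.} Because the physical state is continuous at the reset instant, $h(x(t^+)) = h(x(t^-))$, and the difference of the two barrier values is
\begin{equation*}
  b_{tv}(x(t^+),\tilde{t}^+) - b_{tv}(x(t^-),\tilde{t}^-) = \gamma(\tilde{t}^-) - \gamma(\tilde{t}^- + \delta).
\end{equation*}
The hypothesis $\frac{\partial \gamma}{\partial \tilde{t}} \neq 0$ will be read as saying that $\gamma$ is strictly monotone on the interval between $\tilde{t}^-$ and $\tilde{t}^-+\delta$. This holds, for example, if the derivative is nonzero throughout a connected interval, since continuity of $\gamma'$ then fixes its sign. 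Strict monotonicity with $\delta \neq 0$ gives $\gamma(\tilde{t}^-+\delta) \neq \gamma(\tilde{t}^-)$, so the jump is nonzero. This is the step I expect to need care. A derivative that is nonzero only at a single point does not rule out $\gamma(\tilde t^-+\delta)=\gamma(\tilde t^-)$ for some nonlocal $\delta$. I would first handle this by stating the monotonicity hypothesis explicitly. Alternatively, I would use the mean value theorem, $\gamma(\tilde t^-+\delta)-\gamma(\tilde t^-)=\delta\,\gamma'(\xi)$ with $\xi$ between the two times, so that generic nonvanishing of $\gamma'$ on that interval suffices. This matches the word ``generally'' in the statement.

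\textbf{Step 2: discontinuity and an undefined derivative.} The composed map $t\mapsto b_{tv}(x(t),\tilde t(t))$ has different left and right limits at the reset instant, so it is discontinuous there and hence not differentiable. The expression $\dot b_{tv}$ in the CBF condition of Section~\ref{sec:stand-time-vary} therefore has no classical meaning at that instant. In distributional terms it contains a Dirac term $\bigl(\gamma(\tilde t^-)-\gamma(\tilde t^-+\delta)\bigr)\,\delta_{t}$, which no bounded control can cancel. The standard forward-invariance argument integrates $\dot b_{tv}\ge -\alpha_{\text{cbf}}(b_{tv})$ via the comparison lemma~\cite{khalil2002nonlinear}, and that argument requires $b_{tv}$ to be absolutely continuous along trajectories. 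Its hypothesis therefore fails, and the guarantee is invalidated.

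\textbf{Step 3: possible violations.} Take $b_{tv}(t^-)\in[0,\gamma(\tilde t^-+\delta)-\gamma(\tilde t^-))$ in the case where the jump is downward. In the backward-snap case of Section~\ref{sec:motiv-exampl-vuln}, where $\gamma$ is decreasing and $\delta<0$, this gives $b_{tv}(t^+)<0$ with no intervening control action. The constraint is violated instantaneously, and the QP from the next step onward starts outside $\mathcal{C}(\tilde t)$, where feasibility under a compact $\mathcal{U}$ is no longer guaranteed. This example shows that violations can actually occur, which completes the ``potentially'' clause of the statement.
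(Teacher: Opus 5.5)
Your proposal is correct and follows the same route as the paper, which gives no separate proof but justifies the proposition inline: continuity of $x$ plus the clock offset yields the jump $\gamma(\tilde{t}^-) - \gamma(\tilde{t}^- + \delta)$, so the composed barrier is discontinuous and the derivative condition $\dot{b}_{tv} \ge -\alpha_{\text{cbf}}(b_{tv})$ is meaningless at the reset instant. Your additions make precise what the paper leaves informal: you require $\gamma'$ to be nonvanishing on the whole interval between $\tilde{t}^-$ and $\tilde{t}^-+\delta$, which is genuinely needed (for example, the paper's own clipped envelope is flat for $t > \tau$). You also identify absolute continuity as the failed hypothesis of the comparison-lemma argument, and you give an explicit downward-jump example for the ``potentially'' clause.
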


\begin{proposition}[\textbf{Robustness of Autonomous Projection}]
  \label{prop:2}
  Unlike TV-CBFs, the proposed APB formulation
  $b(x) = h(x) - \mathcal{I}_\alpha^{-1}(\tau)$ completely eliminates the explicit time
  variable $t$, meaning $\frac{\partial b}{\partial t} \equiv 0$. Under a transmission
  delay $\Delta t > 0$, evaluating the barrier over a delayed state estimate
  $\hat{x}(t) = x(t - \Delta t)$ merely introduces a bounded spatial
  measurement error:
  $\|x(t) - \hat{x}(t)\| \le \int_{t-\Delta t}^{t} \|\dot{x}(s)\| ds \le v_{\max} \Delta t$.
  Assuming $b(x)$ is locally Lipschitz continuous with constant $L_b$,
  this bounds the barrier degradation to
  $|b(x(t)) - b(\hat{x}(t))| \le L_b v_{\max} \Delta t$. Because $h(x)$ is
  continuously differentiable, $b(x)$ remains smooth under bounded
  spatial disturbances, avoiding the discontinuous temporal coordinates
  and solver infeasibilities associated with corrupted clock rates.
\end{proposition}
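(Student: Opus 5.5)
The claim has three parts, and I will take them in order. First I will show that $\frac{\partial b}{\partial t} \equiv 0$. Second I will derive the state-error bound under delay. Third I will transfer that bound to the barrier value using the Lipschitz hypothesis. Throughout I treat the delay $\Delta t>0$ as fixed. I also take $v_{\max}$ to be a uniform bound on $\|\dot{x}\|$ along trajectories.

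\textbf{Step 1: no explicit time dependence.} By construction $b(x) = h(x) - \mathcal{I}_\alpha^{-1}(\tau)$. Here $\tau$ is a fixed specification parameter, and $\mathcal{I}_\alpha^{-1}(\tau)$ is a constant computed offline, as described in the reciprocal level-set inversion preceding Theorem~\ref{thm:1}. So $b$ is a map $\mathbb{R}^n\to\mathbb{R}$ that does not involve the local clock $\tilde{t}$ at all, and $\frac{\partial b}{\partial t}\equiv 0$ is immediate. For a clock reset $\tilde{t}^+=\tilde{t}^-+\delta$ as in Proposition~\ref{prop:1}, we get $b(x(t^+))=b(x(t^-))$ because $x$ is continuous. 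So no discrete barrier jump can occur, which is the contrast the proposition draws.

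\textbf{Step 2: the delay error.} The trajectory $x(\cdot)$ is absolutely continuous, since it solves $\dot{x}=f(x,u)$ under piecewise continuous $u$. By the fundamental theorem of calculus,
\[
x(t)-x(t-\Delta t)=\int_{t-\Delta t}^{t}\dot{x}(s)\,ds .
\]
Taking norms and applying the triangle inequality for integrals gives the first inequality. The second inequality needs $\|\dot{x}(s)\|=\|f(x(s),u(s))\|\le v_{\max}$. This is the step I expect to be the main obstacle, because the paper does not state such a bound as a standing assumption. My plan is to justify it from compactness. $\mathcal{U}$ is compact and $f$ is continuous, so if trajectories stay in a compact set $K\subset\mathcal{X}$, then $v_{\max}:=\max_{K\times\mathcal{U}}\|f\|$ is finite. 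Otherwise I will state the speed bound as an explicit hypothesis, as the motivating single-integrator example does with $\|u\|\le v_{\max}$. Integrating the constant bound over an interval of length $\Delta t$ then gives $v_{\max}\Delta t$.

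\textbf{Step 3: transfer to the barrier and smoothness.} Assume $b$ is Lipschitz with constant $L_b$ on a region containing both $x(t)$ and $\hat{x}(t)$. For example, $h\in C^1$ makes $b$ Lipschitz on any compact convex set, with $L_b=\sup\|\nabla h\|$ there. Then
\[
|b(x(t))-b(\hat{x}(t))|\le L_b\|x(t)-\hat{x}(t)\|\le L_b v_{\max}\Delta t .
\]
Since $\nabla b=\nabla h$ is continuous and $b$ has no temporal argument, the evaluated barrier $t\mapsto b(\hat{x}(t))$ is the composition of a $C^1$ function with a continuous, delayed trajectory. It is therefore continuous. Any bounded spatial perturbation changes $b$ by at most $L_b$ times the perturbation size. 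This is in contrast with the jump $h(x)-\gamma(\tilde{t}^-+\delta)$ from Proposition~\ref{prop:1}.

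For the remark on robust invariance, I would point out that the error in Step 3 is uniformly bounded. Tightening the margin to $b(x)\ge L_b v_{\max}\Delta t$ therefore ensures $b(x(t))\ge 0$ from delayed measurements. I will state this tightening as a consequence of the bound, not as part of the proposition itself.
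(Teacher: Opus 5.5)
Your proposal is correct and follows the same argument the paper gives inline in the statement itself. The paper's three steps are: $\partial b/\partial t \equiv 0$ because $\mathcal{I}_\alpha^{-1}(\tau)$ is a constant; the fundamental theorem of calculus plus a speed bound gives $\|x(t)-\hat{x}(t)\|\le v_{\max}\Delta t$; and the Lipschitz constant $L_b$ transfers this to the barrier value. Your explicit handling of $v_{\max}$ (the paper only assumes it tacitly, cf.\ $M=\sup\|f(x,u)\|$ in the sampled-data remark) and of the region on which $L_b$ holds is a welcome tightening, not a different route.
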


\begin{corollary}[\textbf{Time Invariance}]
  \label{cor:clock_invariance}
  Let local clock evaluation be corrupted by an arbitrary temporal
  offset $\tilde{t}(t) = t + \eta(t)$, modeling clock drift or discrete
  phase adjustments. If the control signal is synthesized via the APB
  $b(x) = h(x) - \mathcal{I}_\alpha^{-1}(\tau)$, then
  $u_{\text{APB}}(x, \tilde{t}) \equiv u_{\text{APB}}(x)$.
  
\end{corollary}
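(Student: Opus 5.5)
The plan is to show that $u_{\text{APB}}$ is a composition of maps, none of which takes the clock reading $\tilde{t}$ as an argument, so the functional dependence on $\tilde{t}$ is identically trivial. I would first fix what ``synthesized via the APB'' means: the standard CBF-QP safety filter of Section~\ref{sec:control_synthesis}. Given a nominal input $u_{\text{nom}}$, the filter returns
\begin{equation*}
u_{\text{APB}} = \arg\min_{u \in \mathcal{U}} \|u - u_{\text{nom}}\|^2 \quad \text{s.t.} \quad \nabla b(x)^T f(x,u) \ge -\alpha_{\text{cbf}}(b(x)).
\end{equation*}
Here $\tilde{t}$ enters, if at all, only through $u_{\text{nom}}$. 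So I would take the nominal controller to be state-feedback, or equivalently show that the \emph{safety-filter component} is clock-independent. I would state this as a standing assumption, because otherwise the corollary is false.

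Next, check every ingredient of this QP in turn.
\begin{enumerate}
\item \textbf{The margin.} $h_{\text{req}} = \mathcal{I}_\alpha^{-1}(\tau)$ is a constant computed offline from $\alpha$ and $\tau$ alone. It uses the horizon length $\tau$, not the clock value. Hence $b(x) = h(x) - h_{\text{req}}$ has $\partial b/\partial \tilde{t} \equiv 0$, as already noted in Proposition~\ref{prop:2}.
\item \textbf{The constraint data.} $\nabla b(x) = \nabla h(x)$, and $f(x,u)$ is autonomous by the system model. $\alpha_{\text{cbf}}(b(x))$ depends only on $x$. Crucially, there is no $-\frac{\partial b}{\partial \tilde{t}}\dot{\tilde{t}}$ term, in contrast with the TV-CBF condition of Section~\ref{sec:stand-time-vary}. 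So the derivative $\dot{\tilde t} = 1 + \dot\eta$, including any impulsive jumps of $\eta$, never appears.
\item \textbf{The QP itself.} The feasible set is therefore a map $x \mapsto K(x) \subset \mathcal{U}$. The objective is strictly convex, so the minimizer, when it exists, is unique and is a function of $x$ alone.
\end{enumerate}
Consequently, for any two offsets $\eta_1, \eta_2$, the two QPs are literally identical. This gives $u_{\text{APB}}(x,t+\eta_1(t)) = u_{\text{APB}}(x,t+\eta_2(t))$ for all $x$, which is the claimed identity.

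Finally, I would add that because the closed loop $\dot x = f(x,u_{\text{APB}}(x))$ is unchanged by $\eta$, Theorem~\ref{thm:1} continues to give $\mathbf{G}_{[0,\tau]}(h\ge 0)$ regardless of drift or discrete jumps.

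The main obstacle is not the algebra but the well-posedness of the QP: uniqueness, and feasibility where needed. Feasibility of $K(x)$ on $\mathcal{C}$ is a separate question, independent of $\eta$. I would only claim invariance of the map, \emph{wherever it is defined}, and make the assumption on $u_{\text{nom}}$ explicit.
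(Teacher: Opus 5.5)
Your argument is correct and is the same as the paper's. The margin $h_{\text{req}}$ is an offline constant, and every datum of the filter ($\nabla b(x)$, $f(x,u)$, $b(x)$, $u_{\text{nom}}(x)$) is a function of $x$ alone, with no $\partial b/\partial\tilde t$ term. So the QP, and hence its solution, cannot depend on $\tilde t = t+\eta(t)$.

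Stating the state-feedback assumption on $u_{\text{nom}}$ explicitly is a worthwhile sharpening; the paper only builds it in silently through the notation $u_{\text{nom}}(x)$. One small correction: uniqueness from a strictly convex objective also needs $K(x)$ to be convex (e.g.\ control-affine $f$ and convex $\mathcal{U}$). The invariance claim does not need uniqueness, though, since the argmin set is a function of $x$ alone either way.
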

\begin{proof}
  By construction, $b(x)$ is defined strictly on $\mathbb{R}^n$. With
  $\frac{\partial b}{\partial \tilde{t}} \equiv 0$, the real-time QP filter depends solely
  on the spatial state, rendering the control map and resulting
  trajectory $x(t)$ strictly invariant to temporal corruptions $\eta(t)$.
\end{proof}

\section{Recursive Spatial Compilation}
\label{sec:recursive_compilation}

To systematically enforce complex specifications generated by the
grammar in Equation~\eqref{eq:1}, we define a recursive compilation
function $\llbracket \cdot \rrbracket$ that maps an STL formula directly
into a continuously differentiable spatial predicate $\mathbb{R}^n \to \mathbb{R}$.

To maintain continuous differentiability during logical intersections
and unions, we utilize the Log-Sum-Exp (LSE) smooth minimum and maximum
approximations, parameterized by a strictness constant $\kappa > 0$. The
recursive descent compilation is defined as follows:
\begin{align*}
    \llbracket p \rrbracket(x) &= h_p(x) \\
    \llbracket \neg \phi \rrbracket(x) &= -\llbracket \phi \rrbracket(x) \\
    \llbracket \phi_1 \land \phi_2 \rrbracket(x) &= -\frac{1}{\kappa} \ln \left( e^{-\kappa \llbracket \phi_1 \rrbracket(x)} + e^{-\kappa \llbracket \phi_2 \rrbracket(x)} \right) \\
    \llbracket \phi_1 \lor \phi_2 \rrbracket(x) &= \frac{1}{\kappa} \ln \left( e^{\kappa \llbracket \phi_1 \rrbracket(x)} + e^{\kappa \llbracket \phi_2 \rrbracket(x)} \right) 
\end{align*}
Using this spatial mapping for the inner propositional formula $\phi$, the
outer temporal persistence operator $\mathbf{G}_{[0, \tau]}$ is compiled by
applying the reciprocal level-set inversion derived in
Theorem~\ref{thm:1}. This yields the final, static Autonomous
Persistence Barrier (APB):
$$ b(x) := \llbracket \mathbf{G}_{[0, \tau]} \phi \rrbracket(x) = \llbracket \phi
\rrbracket(x) - \mathcal{I}_\alpha^{-1}(\tau) $$ This recursive procedure allows an
arbitrary, nested STL formula within the persistence fragment to be
compiled entirely offline into a single, time-independent geometry
$b(x)$.

\subsection{Autonomous Control Synthesis}
\label{sec:control_synthesis}

Once the specification $\psi$ is fully compiled into the static barrier $b(x)$, it is directly integrated into a real-time quadratic program (QP) safety filter. Given a nominal (but potentially unsafe) task controller $u_{\text{nom}}(x)$, the safe optimal control $u^*(x)$ is synthesized at each time step by solving:
\begin{align*}
    u^*(x) &= \arg\min_{u \in \mathcal{U}} \frac{1}{2} \|u - u_{\text{nom}}(x)\|^2 \\
    &\text{s.t.} \quad \nabla_x b(x)^T f(x, u) \ge -\gamma_{\text{cbf}} b(x)
\end{align*}
where $\gamma_{\text{cbf}} > 0$ is the CBF tuning parameter governing the
decay rate of the barrier itself. Because $b(x)$ acts as a purely
spatial constraint, the resulting QP matrix completely omits the local
clock state, functionally decoupling the safety filter from global clock
synchronization and network transmission delays.

\subsection{Implementation Remarks \& Completeness}
\label{sec:discussion}

\begin{itemize}
\item \textbf{Soundness vs. Completeness:} Theorem~\ref{thm:1} is sound
  but conservative because $\alpha(h)$ acts as a global lower bound. For
  state-dependent constraints or heterogeneous vector fields, using
  $\alpha(x, h)$ yields
  $\mathcal{I}_\alpha(x, h_{\text{req}}) := \int_{0}^{h_{\text{req}}} \frac{ds}{\alpha(x, s)}$,
  reducing spatial conservatism at the cost of evaluating
  state-dependent level sets.
\item \textbf{Offline Certification of $\alpha$:} For kinematic systems,
  $\alpha$ is readily derived from physical actuator limits ($v_{\max}$ c.f.
  Section~\ref{sec:motiv-exampl-vuln}). For complex, control-affine
  polynomial systems, $\alpha(h)$ can be systematically synthesized offline
  via Sum-of-Squares (SOS) programming by searching for a valid
  class-$\mathcal{K}$ bound over a compact domain $\mathcal{X}$.
\item \textbf{Sampled-Data Extension:} The autonomous nature of the
  compiled barrier $b(x)$ natively supports sampled-data zero-order hold
  (ZOH) controllers. The inter-sample state divergence over a discrete
  time step $\Delta t$ manifests purely as a bounded spatial disturbance.
  Continuous-time safety can thus be guaranteed by applying standard
  spatial robust padding (e.g., $\sim L_b M \Delta t$, where
  $M = \sup \|f(x, u)\|$) to the compiled margin $h_{\text{req}}$,
  avoiding the complex inter-sample temporal integration required by
  TV-CBFs.
\end{itemize}

\section{Experimental Results}
\label{sec:experimental_results}

To evaluate the mathematical guarantees, robustness under real-world
network non-idealities, and real-time viability of the proposed
Asynchronous Predictive Barrier (APB) framework, we conduct a series of
empirical simulations and comparative benchmark studies against
Time-Varying Control Barrier Functions (TV-CBF).

\subsection{Experimental Setup}
\label{subsec:setup}

We simulate two autonomous agents ($i \in \{1, 2\}$) governed by planar
single-integrator dynamics $\dot{p}_i = u_i$, subject to an actuation
limit $\|u_i\|_2 \le v_{\max} = 1.0\,\text{m/s}$ and a minimum safety
separation $\|p_1 - p_2\|_2 \ge d_{\text{min}} = 1.0\,\text{m}$.

Agents evaluate an intersecting maneuver, originating at
$p_1(0) = [0.5, -1.0]^T\,\text{m}$ and
$p_2(0) = [-1.0, 0.0]^T\,\text{m}$ with respective goals at
$[0.5, 4.5]^T\,\text{m}$ and $[4.5, 0.0]^T\,\text{m}$. A nominal
controller drives each agent toward its target:
$u_{i,\text{nom}}(p_i) = v_{\max} (p_{i,\text{goal}} - p_i) /
\|p_{i,\text{goal}} - p_i\|_2$.

The initial inter-agent separation is $\approx 1.8\,\text{m}$, placing them
just outside the compiled spatial barrier
$h_{\text{req}} = 1.7\,\text{m}$ evaluated for a persistence horizon
$\tau = 0.35\,\text{s}$. Simulations execute at
$\Delta t = 0.05\,\text{s}$ over $T = 20.0\,\text{s}$ with a CBF gain
$\gamma = 2.0$. Computations are performed on an Apple M3 host using Python
3.11 and the SciPy SLSQP solver.

\subsection{Validation of Theoretical Soundness}
\label{subsec:soundness}

To empirically validate the soundness theorem of APB, we evaluate the spatial projection requirement $h_{\text{req}}(\tau)$ and verify the corresponding time-to-violation against theoretical predictions across varying persistence horizons $\tau \in [0.1, 2.0]\,\text{s}$.

\begin{figure*}[htbp]
  \centering
  \begin{subfigure}[b]{0.32\textwidth}
    \centering
    \includegraphics[width=\textwidth]{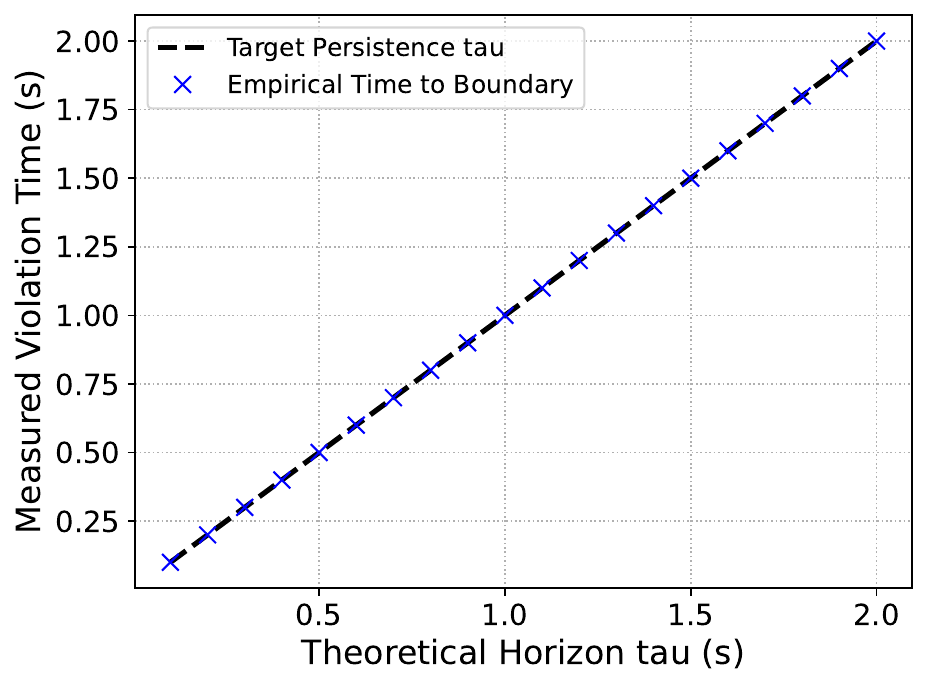}
    \caption{Empirical vs. theoretical persistence time.}
    \label{fig:time_validation}
  \end{subfigure}
  \hfill
  \begin{subfigure}[b]{0.32\textwidth}
    \centering
    \includegraphics[width=\textwidth]{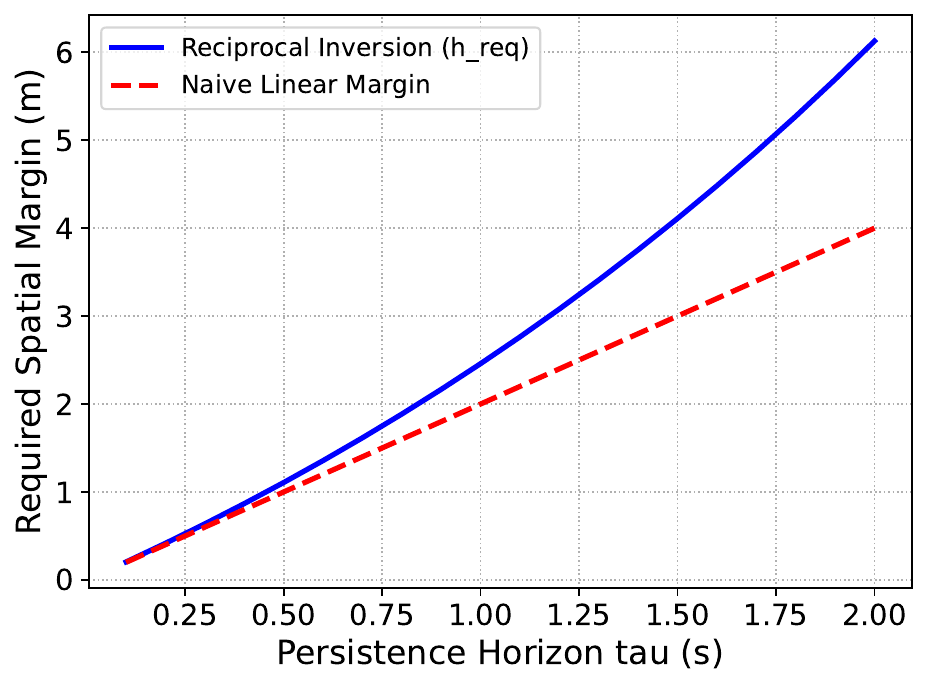}
    \caption{Spatial margin requirement comparison.}
    \label{fig:margin_conservatism}
  \end{subfigure}
  \hfill
  \begin{subfigure}[b]{0.32\textwidth}
    \centering
    \includegraphics[width=\textwidth]{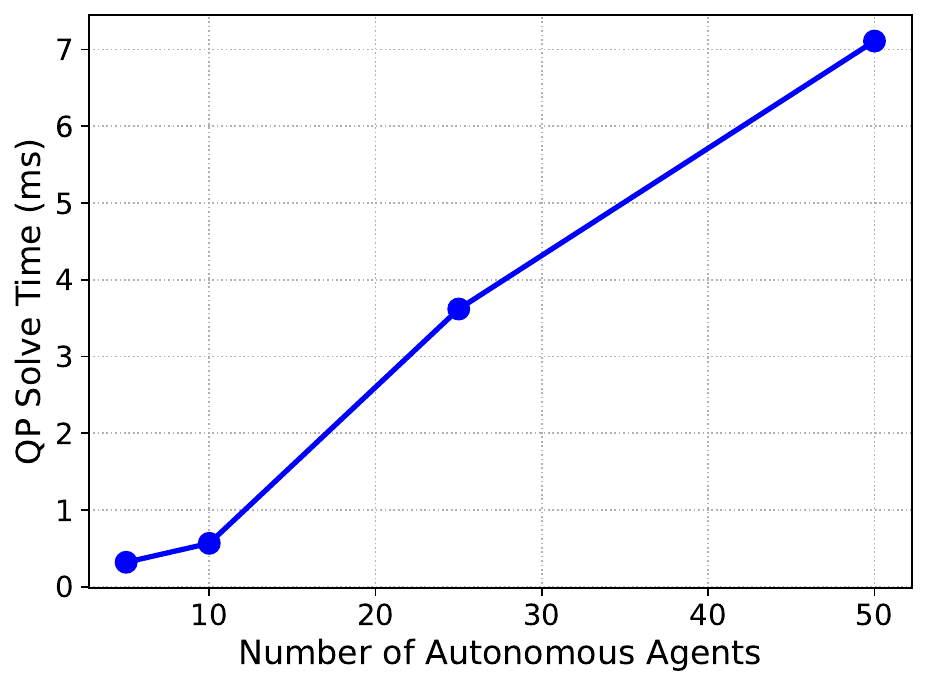}
    \caption{Average QP solve time per agent.}
    \label{fig:scalability}
  \end{subfigure}

  \caption{Validation of the main soundness theorem, spatial margin
    expansion properties, and scalability of APB.}
  \label{fig:soundness_validation}
\end{figure*}

Figure~\ref{fig:time_validation} plots the theoretical persistence
horizon $\tau$ against the empirically measured time required for an agent
trajectory to reach the safety boundary $d_{\text{min}}$ under
worst-case drift conditions. The measured violation times (blue crosses)
align precisely along the ideal 1:1 identity line ($y = x$) across all
tested horizons. This exact correspondence confirms that the compiled
spatial barrier strictly guarantees safety for the duration of the
persistence window $\tau$ without introducing analytical underestimation.

Figure~\ref{fig:margin_conservatism} illustrates the spatial
conservatism of the proposed reciprocal inversion formulation
$h_{\text{req}}(\tau)$ compared to a naive linear spatial margin
$2 v_{\max} \tau$. While the linear approximation assumes static relative
velocity bounds, reciprocal inversion incorporates velocity drift
compensation $k_{\text{drift}} = 0.4$, resulting in a non-linear convex
expansion curve. For small horizons ($\tau = 0.35\,\text{s}$),
$h_{\text{req}}$ remains tightly bounded ($\approx 1.7\,\text{m}$), avoiding
excessive local conservatism while ensuring safety guarantees over
longer asynchronous gaps.

\subsection{Decentralized Scalability}\label{subsec:scalability}We
benchmark APB's scalability in dense multi-agent swap scenarios
($N \in \{5, 10, 25, 50\}$) where agents start uniformly on a
$5.0\,\text{m}$ radius circle and navigate to diametrically opposite
goals. Figure~\ref{fig:scalability} reports average per-agent QP solve
times using SLSQP, demonstrating linear $O(N)$ computational complexity.
This scaling is achieved because decentralized APB requires each agent
to evaluate a local optimization problem subject to exactly $N-1$
pairwise safety constraints, entirely bypassing the combinatorial
state-space explosion of centralized planners. Solve times remain
strictly linear, requiring under $7.0\,\text{ms}$ even in a congested
50-agent crossing, confirming APB's asymptotic efficiency for large
swarm deployments.

% \subsection{Decentralized Scalability}
% \label{subsec:scalability}

% To evaluate the computational scalability of the proposed framework, we
% benchmark APB in dense multi-agent swap scenarios with
% $N \in \{5, 10, 25, 50\}$ autonomous agents. The agents are distributed
% uniformly on a circle of radius $R = 5.0\,\text{m}$ and commanded to
% navigate toward diametrically opposite goals.

% Figure~\ref{fig:scalability} reports the average per-agent Quadratic
% Program (QP) solve times, evaluated using the Sequential Least Squares
% Programming (SLSQP) solver. Crucially, the computational complexity per
% agent scales linearly, $O(N)$, with the total number of agents in the
% environment. This linear scaling is achieved because the decentralized
% APB formulation allows each agent to solve a local optimization problem
% subject to exactly $N-1$ pairwise safety constraints, entirely avoiding
% the combinatorial state-space explosion typical of centralized
% multi-agent planners. The empirical results reflect this theoretical
% efficiency; solve times grow strictly linearly, remaining under
% $7.0\,\text{ms}$ even in a highly congested 50-agent crossing scenario.
% This demonstrates that the algorithm's asymptotic efficiency inherently
% supports scalable deployment in large multi-robot swarms.

\subsection{Ablation Studies and Robustness Analysis}
\label{subsec:ablation}

To quantify resilience under imperfect communication channels, we
perform Monte Carlo ablation experiments comparing APB against standard
Time-Varying CBF (TV-CBF). Each trial consists of 30 Monte Carlo
simulation runs evaluated under three distinct network fault conditions:
\begin{enumerate}
\item \textbf{Random Network Delay}: Communication latency
  $\delta \sim \mathcal{U}(0.1, 0.5)\,\text{s}$.
\item \textbf{Random Clock Skew}: Distributed clock drift rate
  $s \sim \mathcal{U}(-0.9, 1.5)$.
\item \textbf{Random Clock Jitter}: Transient clock phase step
  $j \sim \mathcal{U}(-2.5, 2.5)\,\text{s}$ triggered randomly during
  $t \in [0.0, 0.2]\,\text{s}$.
\end{enumerate}

\begin{figure*}[htbp]
  \centering
  \begin{subfigure}[b]{0.32\textwidth}
    \centering
    \includegraphics[width=\textwidth]{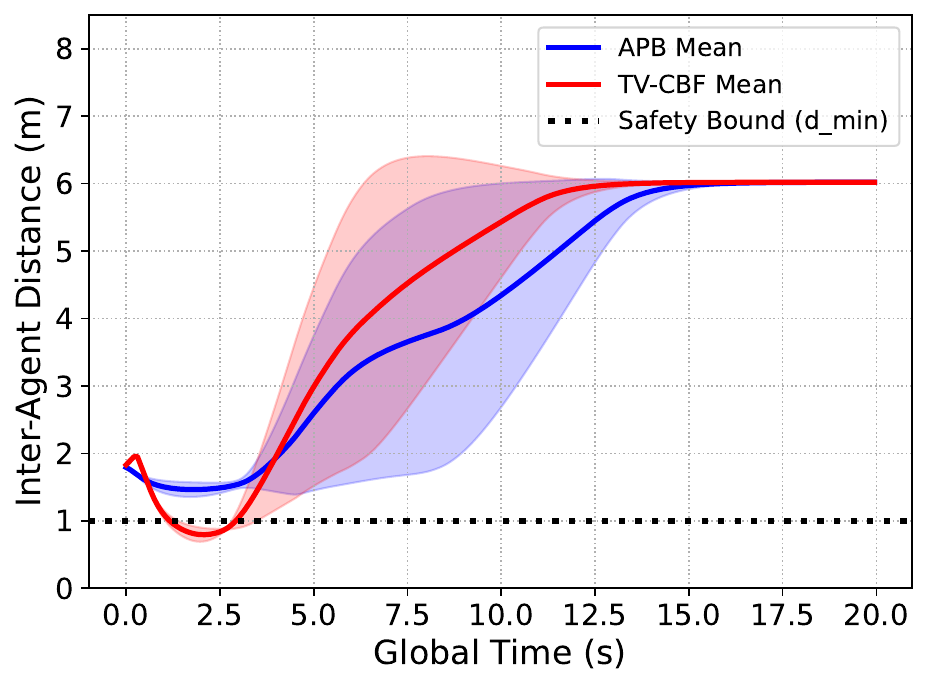}
    \caption{Random Network Delay}
    \label{fig:mc_delay}
  \end{subfigure}
  \hfill
  \begin{subfigure}[b]{0.32\textwidth}
    \centering
    \includegraphics[width=\textwidth]{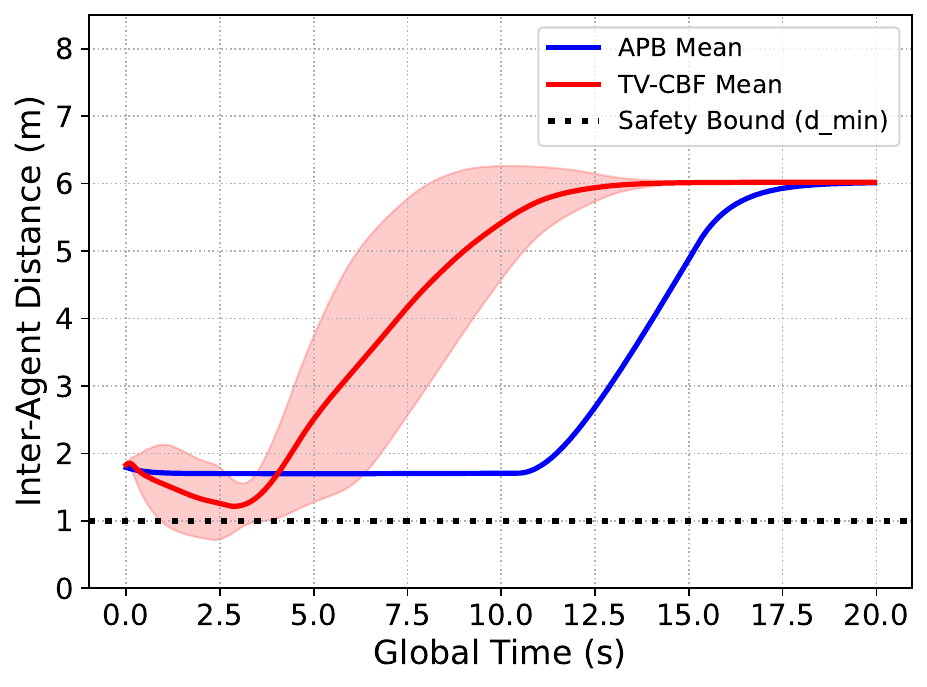}
    \caption{Random Clock Jitter}
    \label{fig:mc_jitter}
  \end{subfigure}
  \hfill
  \begin{subfigure}[b]{0.32\textwidth}
    \centering
    \includegraphics[width=\textwidth]{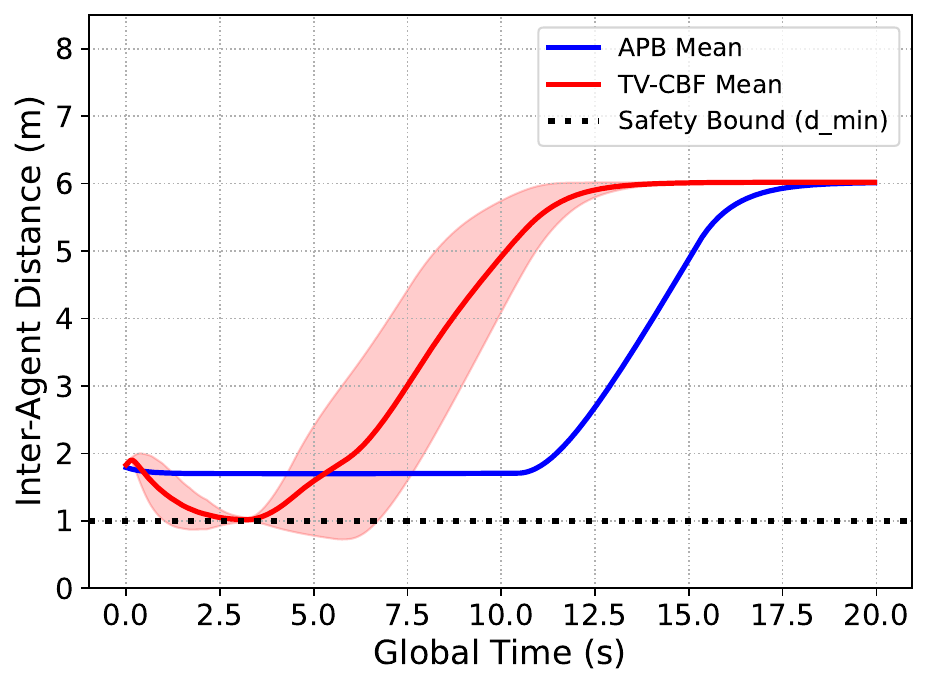}
    \caption{Random Clock Skew}
    \label{fig:mc_skew}
  \end{subfigure}
  \caption{Monte Carlo safety margin trajectories ($d_{\text{min}} = 1.0\,\text{m}$) under non-ideal network conditions across 30 runs. Shaded regions indicate $\pm 1 \sigma$ variance.}
  \label{fig:mc_results}
\end{figure*}

Figure~\ref{fig:mc_results} summarizes the inter-agent separation
distance trajectories over time. Under random network delay
(Figure~\ref{fig:mc_delay}), TV-CBF routinely violates the safety
threshold $d_{\text{min}} = 1.0\,\text{m}$, reaching a minimum mean
separation of $\approx 0.78\,\text{m}$ due to delayed state information. In
contrast, APB maintains an inter-agent distance strictly above
$d_{\text{min}}$ at all times, holding a mean minimum distance around
$1.5\,\text{m}$.

Under random clock jitter (Figure~\ref{fig:mc_jitter}) and skew
(Figure~\ref{fig:mc_skew}), TV-CBF suffers severe degradation and safety
violations because its safety barrier function relies explicitly on
local clock time $t_{\text{local}}$. Conversely, APB compiles temporal
persistence directly into a time-invariant spatial barrier
$h_{\text{req}}$, rendering it immune to local clock desynchronization
(Corollary~\ref{cor:clock_invariance}). As shown in
Figures~\ref{fig:mc_jitter} and~\ref{fig:mc_skew}, APB exhibits zero
standard deviation and safely navigates to the target inter-agent
clearance.

\section{Related Work}
\label{sec:related_work}

Classical optimization-based STL enforcement relies on MILP or
SQP~\cite{raman2014model}, suffering from exponential complexity and
open-loop vulnerability that preclude kilohertz-rate execution. To
achieve reactive continuous-time safety, standard Time-Varying CBFs
(TV-CBFs)~\cite{lindemann2018control,lindemann2020barrier} and smooth
Log-Sum-Exp extensions~\cite{safari2024time} encode specifications in
the extended state-time space $\mathbb{R}^n \times \mathbb{R}_+$. However, these fundamentally
assume synchronized global clocks ($\dot{t} = 1$). Under practical
networked delays or non-monotonic clock jitter, explicit time-varying
envelopes incur unbounded temporal singularities
($\frac{\partial b}{\partial t} \to \infty$), inducing instantaneous QP solver infeasibility.

Attempts to mitigate general system uncertainty include robust
CBFs~\cite{jankovic2018robust, kolathaya2018input,
  biertumpfel2025robust} and delay-functional
CBFs~\cite{molnar2022safety}, which handle continuous spatial
disturbances ($\Delta x$) or state delays. Forcing discrete temporal
desynchronization ($\Delta t$) into spatial bounds creates conservative
patches that fail when network noise exceeds a priori bounds. Similarly,
fixed-time CBFs~\cite{garg2022fixed} and nonsmooth
CBFs~\cite{glotfelter2017nonsmooth} enforce temporal convergence via
custom class-$\mathcal{K}$ functions or extended $(x, t)$ evaluations, but remain
vulnerable to discrete phase snaps. Finally, sampled-data CBF
bounds~\cite{taylor2022safety, singletary2020control} address discrete
execution steps but still rely on continuous temporal clocks for
time-varying logic.

\textbf{Our Positioning (Autonomous Persistence Barriers):} APB is the
first completely time-invariant STL compilation framework for
persistence tasks. While we leverage smooth logical compositions similar
to~\cite{safari2024time}, APB applies them exclusively to \emph{static}
offline spatial geometries $h(x)$. By compiling temporal specifications
entirely into $\mathbb{R}^n$, we systematically eliminate the temporal derivative
$\frac{\partial b}{\partial t}$. This provides structural, deterministic immunity to
networked latency, clock skew, and discrete clock jitter without relying
on arbitrary robust padding.

\section{Conclusion \& Future Directions}
\label{sec:conclusion}

This work introduced a time-invariant compilation strategy for bounded
Signal Temporal Logic safety specifications. By leveraging reciprocal
level-set inversion, we demonstrated that explicit temporal dependencies
can be mathematically mapped into purely static spatial margins. This
fundamental coordinate shift eliminates the structural vulnerabilities
inherent to time-varying barrier functions, providing deterministic
safety guarantees under network delays, clock skew, and discrete clock
jitter without relying on conservative robust padding.

Ongoing and future research directions include: (1) Dynamic bounds:
extending the reciprocal level-set inversion mapping to accommodate
state- and time-varying velocity dissipation bounds. (2) Reach-and-stay
semantics: integrating the time-invariant inner-loop safety filter with
outer-loop reachability planners to efficiently enforce full
eventually-always task semantics.

\bibliographystyle{IEEEtran}
\bibliography{ref.bib}

@article{ames2019control,
  author  = {Ames, Aaron D. and Coogan, Samuel and Egerstedt, Magnus and Notomista, Gennaro and Sreenath, Koushil and Tabuada, Paulo},
  title   = {Control Barrier Functions: Theory and Applications},
  journal = {European Control Conference (ECC)},
  pages   = {3420--3431},
  year    = {2019},
  doi     = {10.23919/ECC.2019.8796030}
}

@article{lindemann2018control,
  title={Control barrier functions for signal temporal logic tasks},
  author={Lindemann, Lars and Dimarogonas, Dimos V},
  journal={IEEE control systems letters},
  volume={3},
  number={1},
  pages={96--101},
  year={2018},
  publisher={IEEE}
}

@article{gupta2010networked,
  author  = {Gupta, Rachana A. and Chow, Mo-Yuen},
  title   = {Networked Control Systems: Overview and Research Trends},
  journal = {IEEE Transactions on Industrial Electronics},
  volume  = {57},
  number  = {7},
  pages   = {2527--2535},
  year    = {2010},
  doi     = {10.1109/TIE.2009.2035462}
}

@inproceedings{safari2024time,
  title={Time-varying soft-maximum control barrier functions for safety in an a priori unknown environment},
  author={Safari, Amirsaeid and Hoagg, Jesse B},
  booktitle={2024 American Control Conference (ACC)},
  pages={3698--3703},
  year={2024},
  organization={IEEE}
}

@article{jankovic2018robust,
  author  = {Jankovic, Mrdjan},
  title   = {Robust Control Barrier Functions for Constrained Stabilization of Nonlinear Systems},
  journal = {Automatica},
  volume  = {96},
  pages   = {359--367},
  year    = {2018},
  publisher={Elsevier},
  doi     = {10.1016/j.automatica.2018.06.050}
}

@article{kolathaya2018input,
  author  = {Kolathaya, Shishir and Ames, Aaron D.},
  title   = {Input-to-State Safety with Control Barrier Functions},
  journal = {IEEE Control Systems Letters},
  volume  = {3},
  number  = {1},
  pages   = {108--113},
  year    = {2018},
  publisher={IEEE},
  doi     = {10.1109/LCSYS.2018.2856722}
}

@inproceedings{maler2004monitoring,
  author    = {Maler, Oded and Nickovic, Dejan},
  title     = {Monitoring temporal properties of continuous signals},
  booktitle = {Formal Techniques, Modelling and Analysis of Timed and Fault-Tolerant Systems (FORMATS-FTRTFT)},
  pages     = {152--166},
  year      = {2004},
  publisher = {Springer}
}

@article{biertumpfel2025robust,
  title   = {Robust Time-Varying Control Barrier Functions with Sector-Bounded Nonlinearities},
  author  = {Biert{\"u}mpfel, Felix and Chun, Jungbae and Seiler, Peter},
  journal = {arXiv preprint arXiv:2511.09784},
  year    = {2025},
  url     = {https://arxiv.org/abs/2511.09784}
}

@book{khalil2002nonlinear,
  title     = {Nonlinear Systems},
  author    = {Khalil, Hassan K.},
  year      = {2002},
  edition   = {3rd},
  publisher = {Prentice Hall},
  address   = {Upper Saddle River, NJ}
}

@article{bhat2000finite,
  title     = {Finite-Time Stability of Continuous Autonomous Systems},
  author    = {Bhat, Sanjay P. and Bernstein, Dennis S.},
  journal   = {SIAM Journal on Control and Optimization},
  year      = {2000},
  publisher = {Society for Industrial and Applied Mathematics}
}

@inproceedings{raman2014model,
  title={Model predictive control with signal temporal logic specifications},
  author={Raman, Vasumathi and Donz{\'e}, Alexandre and Maasoumy, Mehdi and Murray, Richard M and Sangiovanni-Vincentelli, Alberto and Seshia, Sanjit A},
  booktitle={53rd IEEE Conference on Decision and Control},
  pages={81--87},
  year={2014},
  organization={IEEE}
}

@article{glotfelter2017nonsmooth,
  title={Nonsmooth barrier functions with applications to multi-robot systems},
  author={Glotfelter, Paul and Cort{\'e}s, Jorge and Egerstedt, Magnus},
  journal={IEEE Control Systems Letters},
  volume={1},
  number={2},
  pages={310--315},
  year={2017},
  publisher={IEEE}
}

@article{molnar2022safety,
  title={Safety-critical control with input delay in dynamic environment},
  author={Molnar, Tamas G and Kiss, Adam K and Ames, Aaron D and Orosz, G{\'a}bor},
  journal={IEEE transactions on control systems technology},
  volume={31},
  number={4},
  pages={1507--1520},
  year={2022},
  publisher={IEEE}
}

@article{lindemann2020barrier,
  title={Barrier function based collaborative control of multiple robots under signal temporal logic tasks},
  author={Lindemann, Lars and Dimarogonas, Dimos V},
  journal={IEEE Transactions on Control of Network Systems},
  volume={7},
  number={4},
  pages={1916--1928},
  year={2020},
  publisher={IEEE}
}

@article{garg2022fixed,
  title={Fixed-time control under spatiotemporal and input constraints: A {Quadratic Programming} based approach},
  author={Garg, Kunal and Arabi, Ehsan and Panagou, Dimitra},
  journal={Automatica},
  volume={141},
  pages={110314},
  year={2022},
  publisher={Elsevier}
}

@inproceedings{taylor2022safety,
  title={Safety of sampled-data systems with control barrier functions via approximate discrete time models},
  author={Taylor, Andrew J and Dorobantu, Victor D and Cosner, Ryan K and Yue, Yisong and Ames, Aaron D},
  booktitle={2022 IEEE 61st Conference on Decision and Control (CDC)},
  pages={7127--7134},
  year={2022},
  organization={IEEE}
}

@inproceedings{singletary2020control,
  title={Control barrier functions for sampled-data systems with input delays},
  author={Singletary, Andrew and Chen, Yuxiao and Ames, Aaron D},
  booktitle={2020 59th IEEE Conference on Decision and Control (CDC)},
  pages={804--809},
  year={2020},
  organization={IEEE}
}

\end{document}